\documentclass[11pt]{article}
\usepackage[english]{babel}
\usepackage{amsmath}
\usepackage{amsthm}
\usepackage{amssymb}
\usepackage[dvips]{color}
\topmargin=-10mm \oddsidemargin=0mm \evensidemargin=0mm
\textheight=230mm \textwidth=160mm
\newtheorem{theorem}{Theorem}[section]
\newtheorem{definition}[theorem]{Definition}
\newtheorem{problem}[theorem]{Problem}

\newtheorem{remark}[theorem]{Remark}
\newtheorem{example}[theorem]{Example}

\title{\bf A new method to generate superoscillating functions and supershifts}

\author{Y. Aharonov
\footnote{Schmid College of Science and Technology, Chapman University, Orange 92866, CA, US,
\ {\tt aharonov@chapman.edu, \ tollakse@chapman.edu}} ,\
 F. Colombo
 \footnote{Politecnico di Milano, Dipartimento di Matematica, Via E. Bonardi, 9 20133 Milano, Italy, \ {\tt fabrizio.colombo@polimi.it, \, irene.sabadini@polimi.it}} ,\
 I. Sabadini$^{\dagger}$ ,\
 T. Shushi\footnote{Department of Business Administration, Guilford Glazer Faculty of Business and Management, Ben-Gurion University of the Negev, Beer-Sheva, Israel, {\tt tomershu@bgu.ac.il}},\
  D.C. Struppa\footnote{The Donald Bren Distinguished Presidential Chair in Mathematics, Chapman University, Orange, USA, \ {\tt struppa@chapman.edu}} ,\
 J. Tollaksen$^*$
}

\begin{document}
\maketitle

\begin{abstract}
Superoscillations are  band-limited functions that  can oscillate faster than
their fastest Fourier component.
These functions (or sequences) appear in  weak values in quantum mechanics and in many fields of science and technology such as optics, signal processing and antenna theory.
 In this paper we introduce a new method to generate superoscillatory functions that allows us to construct explicitly a very large class of superoscillatory functions.
\end{abstract}
\vskip 1cm
\par\noindent
 AMS Classification: 26A09, 41A60.
\par\noindent
\noindent {\em Key words}: Superoscillating functions,
  New method to generate superoscillations, Supershift.
\vskip 1cm

\section{Introduction }
Superoscillating functions are  band-limited functions that  can oscillate faster than
their fastest Fourier component. Physical phenomena associated with superoscillatory functions
are known since long time and in more recent years there has been  a wide interest both from the physical and the mathematical point of view.
These functions (or sequences) appeared in  weak values in quantum mechanics, see \cite{aav,abook,b5},
in antenna theory this phenomenon was formulated  in  \cite{TDFG}.
The literature on superoscillations is large, and without claiming completeness we mention the papers
 \cite{AShushi},  \cite{berry2}-\cite{b4},  \cite{kempf1}-\cite{kempf2HHH} and \cite{lindberg}.
This class of functions has been investigated also from the mathematical point of view, as function theory, but large part of the results are associated with the study
of the evolution of superoscillations
 by quantum fields equations  with  particular attention to Schr\"odiger equation.
 We give a quite complete list of papers
 \cite{uno}, \cite{ABCS1}-\cite{acsst5},
 \cite{QS2}-\cite {Jussi}, \cite{cinque}-\cite{due}, \cite{sodakemp} where one can find an up-to-date panorama of this field.
  In order to have an overview of the techniques developed in the recent years to study the evolution of superoscillations
and their function theory, we refer the reader to the introductory papers \cite{QS1,QS3,QS2} and \cite{kempfQS}.
Finally we mention the {\em Roadmap on superoscillations}, see \cite{Be19}, where the most recent advances in superoscillations and their applications to technology are well explained by the leading experts in this field.

\medskip
A fundamental problem is to determine how large is the class of superoscillatory functions.
The prototypical superoscillating function that is the outcome of weak values is given by
\begin{equation}\label{FNEXP}
F_n(x,a)=\Big(\cos\Big(\frac{x}{n}\Big)+ia\sin \Big(\frac{x}{n}\Big) \Big)^n
=\sum_{j=0}^nC_j(n,a)e^{i(1-2j/n)x},\ \ x\in \mathbb{R},
\end{equation}
where $a>1$ and the coefficients $C_j(n,a)$ are given by
\begin{equation}\label{Ckna}
C_j(n,a)={n\choose j}\left(\frac{1+a}{2}\right)^{n-j}\left(\frac{1-a}{2}\right)^j.
\end{equation}
If we fix $x \in \mathbb{R}$  and we let $n$ go to infinity, we  obtain that
$$
\lim_{n \to \infty} F_n(x,a)=e^{iax}.
$$
Clearly the name superoscillations comes from the fact that
in the Fourier's representation of the function (\ref{FNEXP}) the frequencies
$1-2j/n$ are bounded by 1, but the limit function $e^{iax}$ has a frequency $a$ that can be arbitrarily larger  than $1$.
A precise definition of superoscillating functions is as follows.

\medskip
We call {\em generalized Fourier sequence}
a sequence of the form
\begin{equation}\label{basic_sequence}
f_n(x):= \sum_{j=0}^n X_j(n,a)e^{ih_j(n)x}, \ \ j=0,...,n, \ \ \ n\in \mathbb{N},
\end{equation}
where $a\in\mathbb R$, $X_j(n,a)$ and $h_j(n)$
are complex and real valued functions of the variables $n,a$ and $n$, respectively.
A generalized Fourier sequence of the form (\ref{basic_sequence})
 is said to be {\em a superoscillating sequence} if
 $\sup_{j,n}|k_j(n)|\leq 1$ and
 there exists a compact subset of $\mathbb R$,
 which will be called {\em a superoscillation set},
 on which $f_n(x)$ converges uniformly to $e^{ig(a)x}$,
 where $g$ is a continuous real valued function such that $|g(a)|>1$.

\medskip
The classical Fourier expansion is obviously not a superoscillating sequence since its frequencies are not, in general, bounded.
Using infinite order differential operators we can define
a class of superoscillatory function applying them to
the functions (\ref{FNEXP}) and we obtain superoscillating functions of the form
\begin{equation}\label{YN}
Y_n(x,a)=\sum_{j=0}^nC_j(n,a)e^{ig(1-2j/n)x},
\end{equation}
where $C_j(n,a)$ are the coefficients in (\ref{Ckna}),  $g$ are given entire functions, monotone increasing in $a$, and $x\in \mathbb{R}$.
We have shown that
$$
\lim_{n\to\infty}Y_n(x,a)=e^{ig(a)x}
$$
under suitable conditions on $g$
and the simplest, but important,  example is
$$
Y_n(x,a)=\sum_{j=0}^nC_j(n,a)e^{i(1-2j/n)^mx}, \ \ {\rm for\ fixed} \ \ m\in \mathbb{N}.
$$
From the above considerations we deduce that there exists a large class of
superoscillating functions taking different functions $g$.

\medskip
In this paper we further enlarge the class of superoscillating functions enlarging both the
class of the coefficients  $C_j(n,a)$ and of the sequence of frequencies $1-2j/n$ that are bounded by 1.
A large class of superoscillating functions can be determined solving the following problem.
\begin{problem}\label{gsgdfaA}
Let $h_j(n)$ be a given set of points in $[-1,1]$, $j=0,1,...,n$, for $n\in \mathbb{N}$ and let $a\in \mathbb{R}$ be such that $|a|>1$.
Determine the coefficients $X_j(n)$ of the sequence
$$
f_n(x)=\sum_{j=0}^nX_j(n)e^{ih_j(n)x},\ \ \ x\in \mathbb{R}
$$
in such a way that
$$
f_n^{(p)}(0)=(ia)^p,\ \ \ {\rm for} \ \ \ p=0,1,...,n.
$$
\end{problem}
\begin{remark}
The conditions $f_n^{(p)}(0)=(ia)^p$  mean that the functions $x\mapsto e^{iax}$ and $f_n(x)$ have the same derivatives at the origin, for
$p=0,1,...,n$,  so they have the same Taylor polynomial of order $n$.

\end{remark}

Under the condition that the points  $h_j(n)$  for $j=0,...,n$, (often denoted by $h_j$) are distinct we
obtain an explicit formula  for the coefficients  $X_j(n,a)$ given by
$$
X_j(n,a)=
\prod_{k=0,\  k\not=j}^n\Big(\frac{h_k(n)-a}{h_k(n)-h_j(n)}\Big),
$$
so the superoscillating sequence $f_n(x)$, that solves Problem \ref{gsgdfaA}, takes  the explicit  form
$$
f_n(x)=\sum_{j=0}^n\prod_{k=0,\  k\not=j}^n\Big(\frac{h_k(n)-a}{h_k(n)-h_j(n)}\Big)\ e^{ih_j(n)x},\ \ \ x\in \mathbb{R},
$$
as shown in Theorem \ref{exsolution}.
Observe that, by construction, this function is band limited and it converges to $e^{iax}$ with arbitrary $|a|>1$, so it is superoscillating.

\medskip
Observe that different sequences $X_j(n)$ can be explicitly computed when we fix the points $h_j(n)$.
See, for example, the case of the sequence
$$
 h_j(n)=1-2j/n^p \   {\rm for} \ j=0,...,n, \ \ \ n\in \mathbb{N} \ \ {\rm and}\  {\rm for\ fixed}\
  \ p\in \mathbb{N},
$$
in Section \ref{EXSUPER}.

\medskip
We consider now the frequencies
$h_j(n)=(1-2j/n)^m$, for fixed $m\in \mathbb{N}$,
to explain some facts.

(I) If we consider the sequence (\ref{YN}), with coefficients $C_j(n,a)$ given by (\ref{Ckna}), we obtain
$$
\lim_{n\to\infty}\sum_{j=0}^nC_j(n,a)e^{i(1-2j/n)^mx}=e^{ia^mx},    \ \ \ \  {\rm for\ fixed} \ \ m\in \mathbb{N}.
$$
 Note that, in this case, we could have used the frequencies $h_j(n)=(1-2j/n)$  and the coefficients $\tilde C_j(n,a):=C_j(n,a^m)$ to get as limit function $e^{ia^mx}$.
 Thus the same limit function  $e^{ia^mx}$ can be obtained by tuning the frequencies and the coefficients.

(II)
By  solving Problem \ref{gsgdfaA} with the frequences
$h_j(n)=(1-2j/n)^m$, we can determine the coefficients $X_j(n)=X_j(n,a)$ such that we obtain as limit function $e^{iax}$, namely
\begin{equation}\label{sdkgfo}
\lim_{n\to\infty}\sum_{j=0}^n{X}_j(n,a)e^{i(1-2j/n)^mx}=e^{iax}, \ \ \ {\rm for\ fixed } \ \  m\in \mathbb{N}.
\end{equation}
Changing the coefficients ${\tilde{X}}_j(n,a)$ we can get, as limit function, $e^{ia^mx}$.

(III) The coefficients $X_j$ and $C_j$ in the procedures (I) and (II) are different form each other
 because the two methods to generate superoscillations are different  as explained in Section 3.

\medskip
In section 4 we will also discuss how to generalize this method to obtain analogous results in the case of the supershift property of functions, a mathematical concept that generalizes the notion of superoscillating function.

\section{A new class of superoscillating functions}

In this section we show the main procedure to determine the coefficients $X_j(n)$ and so to construct explicitly the superoscillating functions
solving Problem \ref{gsgdfaA}.

\begin{theorem}[Existence and uniqueness of the solution of Problem \ref{gsgdfaA}]
\label{gsgdfathm}
Let $h_j(n)$ be a given set of points in $[-1,1]$, $j=0,1,...,n$ for $n\in \mathbb{N}$ and let $a\in \mathbb{R}$ be such that $|a|>1$.
If $h_j(n)\not= h_i(n)$, for every $i\not=j$,
then there exists a unique solution $X_j(n)$ of the linear system
$$
f_n^{(p)}(0)=(ia)^p,\ \ \ {\rm for} \ \ \ p=0,1,...,n,
$$
 in Problem \ref{gsgdfaA}.
\end{theorem}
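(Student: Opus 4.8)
The plan is to convert the $n+1$ derivative conditions at the origin into a square linear system for the unknowns $X_0(n),\dots,X_n(n)$ and to recognize that system as a Vandermonde system. First I would differentiate the exponential sum termwise: since $f_n(x)=\sum_{j=0}^n X_j(n)e^{ih_j(n)x}$, one gets $f_n^{(p)}(x)=\sum_{j=0}^n X_j(n)(ih_j(n))^p e^{ih_j(n)x}$, hence evaluating at $x=0$ gives $f_n^{(p)}(0)=i^p\sum_{j=0}^n X_j(n)h_j(n)^p$. The requirement $f_n^{(p)}(0)=(ia)^p=i^p a^p$ then reduces, after cancelling the nonzero factor $i^p$, to the real moment conditions $\sum_{j=0}^n X_j(n)h_j(n)^p=a^p$ for $p=0,1,\dots,n$.

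Next I would write these $n+1$ equations in matrix form $V\mathbf{X}=\mathbf{b}$, where $V=(h_j(n)^p)_{p,j=0}^n$, $\mathbf{X}=(X_0(n),\dots,X_n(n))^{\top}$ and $\mathbf{b}=(1,a,\dots,a^n)^{\top}$. The matrix $V$ is (the transpose of) the Vandermonde matrix built on the nodes $h_0(n),\dots,h_n(n)$, so its determinant is $\det V=\prod_{0\le i<k\le n}(h_k(n)-h_i(n))$. The hypothesis $h_j(n)\ne h_i(n)$ for $i\ne j$ makes every factor nonzero, whence $\det V\ne 0$, the matrix $V$ is invertible, and the square system admits the unique solution $\mathbf{X}=V^{-1}\mathbf{b}$. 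This yields existence and uniqueness simultaneously, which is exactly the content of the theorem.

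Finally, to connect with the explicit formula stated just before the theorem, I would solve the system by Lagrange interpolation rather than by inverting $V$ directly: the moment equations are precisely the assertion that $\sum_{j=0}^n X_j(n)P(h_j(n))=P(a)$ for every polynomial $P$ of degree at most $n$, since the monomials $1,x,\dots,x^n$ span that space. Taking $P$ to be the Lagrange basis polynomial attached to the node $h_j(n)$ isolates $X_j(n)=\prod_{k\ne j}\frac{a-h_k(n)}{h_j(n)-h_k(n)}=\prod_{k\ne j}\frac{h_k(n)-a}{h_k(n)-h_j(n)}$, recovering the announced coefficients. I do not expect a genuine obstacle here: the entire mathematical content lies in identifying the Vandermonde structure, and the only points demanding care are the bookkeeping of the factors $i^p$ and the verified nonvanishing of the Vandermonde determinant under the distinctness assumption.
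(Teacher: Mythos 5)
Your proposal is correct and follows essentially the same route as the paper: reduce the derivative conditions $f_n^{(p)}(0)=(ia)^p$ to the moment system $\sum_{j=0}^n X_j(n)h_j(n)^p=a^p$, recognize its matrix as a Vandermonde matrix, and invoke the nonvanishing of $\prod_{0\le i<k\le n}(h_k(n)-h_i(n))$ under the distinctness hypothesis. Your closing Lagrange-interpolation derivation of the explicit coefficients is a slightly slicker alternative to the paper's Cramer's-rule computation, but that belongs to the subsequent theorem rather than to the existence-and-uniqueness statement at hand.
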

\begin{proof}
For the sake of simplicity, we denote $h_j(n)$ by $h_j$.
Observe that the derivatives of order $p$ of  $f_n(x)$ are
$$
f^{(p)}_n(x)=\sum_{j=0}^nX_j(n)(ih_j)^pe^{ih_jx},\ \ \ x\in \mathbb{R},
$$
so if we require that these derivatives are equal to the
derivatives of order $p$ for $p=0,1,...,n$ of
the function $x\mapsto e^{iax}$ at the origin we obtain the linear system
\begin{equation}\label{linsyst}
\sum_{j=0}^nX_j(n)(ih_j)^p=(ia)^p,\ \ \ \ p=0,1,...,n
\end{equation}
from which we deduce
\begin{equation}\label{sistlin}
\sum_{j=0}^nX_j\ (h_j)^p=a^p,\ \ \ \ p=0,1,...,n,
\end{equation}
where we have written $X_j$ instead of $X_j(n)$.
Now we write  explicitly the linear system (\ref{sistlin}) of $(n+1)$ equations and $(n+1)$ unknowns
$(X_0,...,X_n)$
\[
\begin{split}
X_0+X_1+ \ldots +X_n&=1
\\
X_0h_0+X_1h_1+ \ldots +X_nh_n&=a
\\
&
\ldots
\\
X_0h_0^n+X_1h_1^n+ \ldots +X_nh_n^n&=a^n
\end{split}
\]
and,
 in matrix form, it becomes
\begin{equation}\label{LSYT}
H(n)X=B(a)
\end{equation}
where $H$ is the
$(n+1)\times (n+1)$ matrix
\begin{equation}
H(n)=\begin{pmatrix} 1 & 1 & \ldots & 1
\\
h_0& h_1&  \ldots & h_n
 \\
\ldots & \ldots&  \ldots & \ldots
\\
h_0^n& h_1^n&  \ldots & h_n^n
 \end{pmatrix}
\end{equation}
and
\begin{equation}
X=\begin{pmatrix} X_0
\\
X_1
 \\
\ldots
\\
X_n
 \end{pmatrix}
 \ \ \
 {\rm and}
 \ \ \
 B(a)=\begin{pmatrix} 1
\\
a
 \\
\ldots
\\
a^n
 \end{pmatrix}.
\end{equation}
Observe that the determinant of $H$ is the Vandermonde determinant, so it is given by
$$
\det(H(n))=\prod_{0\leq i< j\leq n}(h_j(n)-h_i(n)).
$$
Thus, if $h_j(n)\not= h_i(n)$ for every $i\not=j$, there exists a unique solution of the system. \end{proof}

\begin{theorem}[Explicit solution of Problem \ref{gsgdfaA}]\label{exsolution}
Let $h_j(n)$ be a given set of points in $[-1,1]$, $j=0,1,...,n$ for $n\in \mathbb{N}$ and let $a\in \mathbb{R}$ be such that $|a|>1$.
If $h_j(n)\not= h_i(n)$, for every $i\not=j$,
the unique solution of system (\ref{LSYT})
is given by
\begin{equation}\label{sxplsolut}
X_j(n,a)=
\prod_{k=0,\  k\not=j}^n\Big(\frac{h_k(n)-a}{h_k(n)-h_j(n)}\Big).
\end{equation}
As a consequence, the superoscillating function takes the form
$$
f_n(x)=\sum_{j=0}^n\prod_{k=0,\  k\not=j}^n\Big(\frac{h_k(n)-a}{h_k(n)-h_j(n)}\Big)\ e^{ih_jx},\ \ \ x\in \mathbb{R}.
$$
\end{theorem}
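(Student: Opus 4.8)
The plan is to recognize the claimed formula (\ref{sxplsolut}) as the classical Lagrange interpolation weights and to reduce the entire statement to the interpolation identity for monomials. Since Theorem \ref{gsgdfathm} already guarantees that the Vandermonde system (\ref{LSYT}) has a \emph{unique} solution whenever the nodes $h_j$ are distinct, it suffices to \emph{exhibit} one solution: any tuple $(X_0,\dots,X_n)$ satisfying (\ref{sistlin}) must then coincide with it.

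First I would rewrite the proposed coefficients in Lagrange form. Introducing the Lagrange basis polynomials associated with the nodes $h_0,\dots,h_n$,
$$
\ell_j(t)=\prod_{k=0,\ k\neq j}^n\frac{t-h_k}{h_j-h_k},\qquad \ell_j(h_i)=\delta_{ij},
$$
a simple cancellation of the signs in numerator and denominator gives
$$
\prod_{k=0,\ k\neq j}^n\frac{h_k-a}{h_k-h_j}=\prod_{k=0,\ k\neq j}^n\frac{a-h_k}{h_j-h_k}=\ell_j(a),
$$
since the $n$ factors $-1$ in the numerators exactly cancel the $n$ factors $-1$ in the denominators. Thus the claim is precisely $X_j(n,a)=\ell_j(a)$.

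Next I would invoke the defining reproducing property of Lagrange interpolation: every polynomial $P$ of degree at most $n$ is recovered from its values at the $n+1$ distinct nodes, that is $P(t)=\sum_{j=0}^n P(h_j)\,\ell_j(t)$. Applying this to the monomials $P(t)=t^p$ for $p=0,1,\dots,n$ yields the identity $t^p=\sum_{j=0}^n h_j^{\,p}\,\ell_j(t)$, and evaluating it at $t=a$ produces
$$
a^p=\sum_{j=0}^n h_j^{\,p}\,\ell_j(a),\qquad p=0,1,\dots,n,
$$
which is exactly the system (\ref{sistlin}) with $X_j=\ell_j(a)$. Hence the coefficients in (\ref{sxplsolut}) solve the system, and by the uniqueness part of Theorem \ref{gsgdfathm} they are the only solution; substituting them into $f_n$ gives the stated closed form.

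I expect no genuine obstacle here, as the result is merely a repackaging of Lagrange interpolation; the only point requiring care is the sign bookkeeping in identifying the product (\ref{sxplsolut}) with $\ell_j(a)$. If one preferred a self-contained argument avoiding any appeal to interpolation theory, the alternative is to solve (\ref{LSYT}) directly by Cramer's rule: the numerator determinant is again a Vandermonde-type determinant in which the column of powers of $h_j$ is replaced by the powers of $a$, and the quotient of the two determinants collapses to the product in (\ref{sxplsolut}). That route is elementary but involves heavier determinant manipulation, so I would favor the interpolation argument above.
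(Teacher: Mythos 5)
Your proof is correct, but it takes a genuinely different route from the paper. The paper proceeds by Cramer's rule: it writes $X_j(n,a)=\det(H_j(n,a))/\det(H(n))$, where $H_j(n,a)$ is the Vandermonde matrix with the $j$-th column replaced by the powers of $a$, and then evaluates both Vandermonde-type determinants explicitly, checking the quotient collapses to the product \eqref{sxplsolut} column by column (the route you sketch, and discard, in your final paragraph). You instead recognize $X_j(n,a)=\ell_j(a)$ as the Lagrange weights at the node $a$, verify via the reproducing identity $t^p=\sum_{j=0}^n h_j^{\,p}\,\ell_j(t)$ for $p\le n$ that these weights satisfy \eqref{sistlin}, and conclude by the uniqueness already established in Theorem~\ref{gsgdfathm}. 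Your sign bookkeeping is right: each of the $n$ fractions contributes a factor $-1$ in both numerator and denominator, so $\prod_{k\neq j}\frac{h_k-a}{h_k-h_j}=\ell_j(a)$. Logically your ``exhibit one solution, invoke uniqueness'' structure is impeccable and arguably tighter than the paper's own argument, which leaves the computation of $\det(H_j(n,a))$ at the level of ``can be computed similarly.'' What the paper's derivation buys is constructiveness --- the formula is derived rather than guessed and verified --- which matches its expository aim of showing \emph{how} the coefficients arise from the linear system. What your route buys is conceptual clarity and economy: it explains that $f_n$ is nothing but the Lagrange interpolation, at the exterior point $a$, of the frequency dependence $\lambda\mapsto e^{i\lambda x}$ on the nodes $h_0,\dots,h_n$, and this viewpoint makes it immediate why the identical coefficients reappear unchanged in the supershift theorem of Section~4, where the exponential is replaced by a general entire function $G$.
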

\begin{proof}
In Theorem \ref{gsgdfathm} we proved that, if $h_j\not= h_i$ for every $i\not=j$, there exists a unique solution of the system (\ref{LSYT}). The solution is given by
\begin{equation}\label{gghhhh}
X_j(n,a)=\frac{\det(H_j(n,a))}{\det(H(n))}
\end{equation}
for
\begin{equation}
H_j(n,a)=\begin{pmatrix} 1 & 1 & \ldots & 1 & \ldots &  1
\\
h_0& h_1&   \ldots & a & \ldots &  h_n
 \\
\ldots & \ldots &  \ldots & \ldots& \ldots
\\
h_0^n& h_1^n&  \ldots & a^n & \ldots &  h_n^n
 \end{pmatrix}
\end{equation}
where the $j^{th}$-column contains $a$ and its powers.
The explicit form of the determinant of the matrix $H$ is given by:
$$
det(H(n))=(h_1-h_0)\cdot (h_2-h_0)(h_2-h_1) \cdot
(h_3-h_0)(h_3-h_1)(h_3-h_2)
$$
$$\cdot (h_4-h_0)(h_4-h_1)(h_4-h_2)(h_4-h_3)\cdot \ldots
\cdot (h_n-h_0)(h_n-h_1)(h_n-h_2)(h_n-h_3) ..... (h_n-h_{n-1}).
$$

The matrix $H_j(n,a)$ is still of Vandermonde type and its determinant can be computed similarly.
 So we have that the solution $(X_0(n,a),\ldots, X_n(n,a))$ is such that
\[
\begin{split}
X_0(n,a)&=\frac{(h_1-a)\cdot (h_2-a) \cdot(h_3-a)\cdot (h_4-a)\cdot \ldots\cdot (h_n-a)}{
(h_1-h_0)\cdot (h_2-h_0) \cdot(h_3-h_0)\cdot (h_4-h_0)\cdot \ldots\cdot (h_n-h_0)}
\\
&
=\frac{\prod_{k=0,\  k\not=0}^n(h_k-a)}{\prod_{k=0,\  k\not=0}^n(h_k-h_0)},
\end{split}
\]
\[
\begin{split}
X_1(n,a)&=
\frac{(a-h_0)\cdot (h_2-a) \cdot(h_3-a)\cdot (h_4-a)\cdot \ldots\cdot (h_n-a)
}{
(h_1-h_0)\cdot (h_2-h_1) \cdot(h_3-h_1)\cdot (h_4-h_1)\cdot \ldots\cdot (h_n-h_1)}
\\
&
=\frac{\prod_{k=0,\  k\not=1}^n(h_k-a)}{\prod_{k=0,\  k\not=1}^n(h_k-h_1)},
\end{split}
\]
and so on, up to
\[
\begin{split}
X_n(n,a)&=
\frac{1\cdot 1 \cdot1\cdot 1\cdot \ldots\cdot (a-h_0)(a-h_1)(a-h_2)(a-h_3) ..... (a-h_{n-1})
}{1\cdot 1 \cdot1\cdot 1\cdot \ldots\cdot (h_n-h_0)(h_n-h_1)(h_n-h_2)(h_n-h_3) ..... (h_n-h_{n-1})}
\\
&
=\frac{\prod_{k=0,\  k\not=n}^n(h_k-a)}{\prod_{k=0,\  k\not=n}^n(h_k-h_n)}.
\end{split}
\]
So we get the statement with the recursive formula.
\end{proof}

\section{The methods to generate superoscillations and examples}\label{EXSUPER}

Below we compare the superoscillating functions obtained by solving Problem
\ref{gsgdfaA} and
the superoscillating functions obtained via the sequence $F_n(x,a)$ and infinite order differential operators.
For different methods see also \cite{kempf2HHH}.

\medskip
(I)
Observe that the limit
$$
\lim_{n\to\infty}\Big(\cos\Big(\frac{x}{n}\Big)+ia\sin \Big(\frac{x}{n}\Big) \Big)^n=e^{iax}
$$
is a direct consequence of
$$
\lim_{n\to\infty}\Big(1+ia\frac{x}{n}\Big)^{n}=e^{iax},
$$
while the construction method to generate superoscillations in Theorem \ref{exsolution} has a different nature
 because we require that the linear system (\ref{linsyst}) in the $n+1$ unknowns $X_j(n)$ are determined in such a way that
\begin{equation}
\sum_{j=0}^nX_j(n)(ih_j)^p=(ia)^p,\ \ \ \ p=0,1,...,n
\end{equation}
so the derivatives
$$
f^{(p)}_n(x)=\sum_{j=0}^nX_j(n)(ih_j)^pe^{ih_j(n)x},\ \ \ x\in \mathbb{R},
$$
at $x=0$, are equal to the derivatives of the exponential function $e^{iax}$ at the origin.
This means that the sequence of functions
$$
f_n(x)=\sum_{j=0}^nX_j(n)e^{ih_j(n)x},\ \ \ x\in \mathbb{R}
$$
has $n$ derivatives equal to the derivatives of exponential function  $e^{iax}$ at the origin
so the limit
$$
\lim_{n\to\infty}f_n(x)=e^{iax}
$$
 follows by construction of the $f_n(x)$.

\medskip
(II) In the definition of the superoscillating function
(\ref{FNEXP}) the derivatives are given by
$$
F_n^{(p)}(x,a)
=\sum_{j=0}^nC_j(n,a)\Big(i(1-2j/n)\Big)^pe^{i(1-2j/n)x},\ \ x\in \mathbb{R}
$$
and it is just in the limit that we get the derivatives of order $p$ of the exponential function $e^{iax}$
at the origin, namely we have
$$
\lim_{n\to \infty}\sum_{j=0}^nC_j(n,a)\Big(i(1-2j/n)\Big)^p=(ia)^p, \ \ \ p\in \mathbb{N}.
$$

\medskip
(III)
 With the new procedure proposed in this paper  we impose the conditions
$$
f^{(p)}_n(0)=(ia)^p, \ \ \ p=0,1,2,...,n
$$
(where we have genuine equalities, not in the limit)
and we link $n$ with the derivatives of $f_n(x)$ in order to determine the coefficients $X_j(n)$ in
(\ref{basic_sequence}), so
we have that the Taylor polynomials of the two functions $f_n(x)$ and $e^{iax}$
are the same up to order $n$, i.e.,
$$
e^{iax}=1+iax+\frac{(iax)^2}{2!}+...+\frac{(iax)^n}{n!}+R_{n}(x),
$$
so we get
\[
\begin{split}
f_n(x)&=\sum_{j=0}^nX_j(n)e^{ih_jx}
\\
&
= f_n(0)+f_n^{(1)}(0)x+f_n^{(2)}(0)\frac{x^2}{2!}+...+
 f_n^{(n)}(0)\frac{x^n}{n!}+\tilde{R}_{n}(x),\ \ \ x\in \mathbb{R}
\\
&
= 1+ia x+(ia)^2\frac{x^2}{2!}+...+
 (ia)^n\frac{x^n}{n!}+\tilde{R}_{n}(x),\ \ \ x\in \mathbb{R},
\end{split}
\]
where ${R}_{n}(x)$ and
$\tilde{R}_{n}(x)$ are the errors.

\medskip

It is now easy to generate a very large class of superoscillatory functions.
We write a few examples to further clarify the generality of our new construction
of superoscillating sequences: given the sequence $h_j(n)$ we determine the coefficients accordingly
for the
$$
f_n(x)=\sum_{j=0}^n\prod_{k=0,\  k\not=j}^n\Big(\frac{h_k(n)-a}{h_k(n)-h_j(n)}\Big)\ e^{ih_j(n)x},\ \ \ x\in \mathbb{R}.
$$

\begin{example}
Let $n\in\mathbb N$ and set
$$
h_j(n)=1-\frac{2}{n}j$$ where $j=0,...,n$. We have
$$
h_k(n)-a=1-\frac{2}{n}k -a
$$
and
$$
h_k(n)-h_j(n)=1-\frac{2}{n}k -\Big( 1-\frac{2}{n}j\Big)=\frac{2}{n}\Big(j-k\Big).
$$
Thus, we obtain
$$
f_n(x)=\sum_{j=0}^n\prod_{k=0,\  k\not=j}^n\frac{n}{2}\Big(\frac{1-\frac{2}{n}k-a}{j-k}\Big)\ e^{i(1-\frac{2}{n}j)x},\ \ \ x\in \mathbb{R}.
$$
\end{example}

\begin{example}
Let $n\in\mathbb N$, and set  $$h_j(n)=1-\frac{2}{n^p}j$$ where $j=0,...,n$, for a fixed $p\in \mathbb{N}$. We have
$$
h_k(n)-a=1-\frac{2}{n^p}k -a,
$$
and
$$
h_k(n)-h_j(n)=1-\frac{2}{n^p}k -\Big( 1-\frac{2}{n^p}j\Big)=\frac{2}{n^p}\Big(j-k\Big).
$$
So, we obtain:
$$
f_n(x)=\sum_{j=0}^n\prod_{k=0,\  k\not=j}^n\frac{n^p}{2}\Big(\frac{1-\frac{2}{n^p}k-a}{j-k}\Big)\ e^{i(1-\frac{2}{n^p}j)x},\ \ \ x\in \mathbb{R}.
$$
\end{example}

\begin{example}
Let $n\in\mathbb N$, and set  $$h_j(n)=1-\Big(\frac{2j}{n}\Big)^p$$ where $j=0,...,n,$ for a fixed $p\in \mathbb{N}$. We have
$$
h_k(n)-a=1-\Big(\frac{2k}{n}\Big)^p -a,
$$
and
$$
h_k(n)-h_j(n)=1-\Big(\frac{2k}{n}\Big)^p -\Big( 1-\Big(\frac{2j}{n}\Big)^p\Big)=
\frac{2^p}{n^p}\Big(j^p-k^p\Big).
$$
So, we obtain
$$
f_n(x)=\sum_{j=0}^n\prod_{k=0,\  k\not=j}^n\frac{n^p}{2^p}\Big(\frac{1-\frac{2k}{n}-a}{j^p-k^p}\Big)\ e^{i(1-(2j/n)^p)x},\ \ \ x\in \mathbb{R}.
$$
\end{example}

\section{ A new class of supershifts}

 The procedure to define superoscillatory functions can be extended to supershift.
 We recall that the supershift property of a function extends the notion
 of superoscillations and it turned out to be the crucial concept behind the study of the evolution
 of superoscillatory functions as initial conditions of Schr\"odinger equation or of any other field equation.
We recall the definition before to state our result.
\begin{definition}[Supershift]\label{Super-shift}
Let $\mathcal{I}\subseteq\mathbb{R}$ be an interval with $[-1,1]\subset \mathcal I$
and let
$\varphi:\, \mathcal I  \times \mathbb{R}\to \mathbb R$, be a continuous function on $\mathcal I$.
We set
$$
\varphi_{\lambda}(x):=\varphi(\lambda,x), \ \ \lambda \in \mathcal{I},\ \ x\in \mathbb R
 $$
and we consider a sequence of points $(\lambda_{j,n})$ such that
 $$
  (\lambda_{j,n})\in [-1,1] \ \  {\rm for} \ \  j=0,...,n \ \ {\rm  and} \ \ n=0,\ldots,+\infty.
 $$
   We define the functions
\begin{equation}\label{psisuprform}
\psi_n(x)=\sum_{j=0}^nc_j(n)\varphi_{\lambda_{j,n}}(x),
\end{equation}
where  $(c_j(n))$ is a sequence of complex numbers for $j=0,...,n$ and $n=0,\ldots,+\infty$.
If
$
\lim_{n\to\infty}\psi_n(x)=\varphi_{a}(x)
$
for some $a\in\mathcal I$ with $|a|>1$, we say that the function
$\lambda\to \varphi_{\lambda}(x)$, for  $x$ fixed, admits a supershift in $\lambda$.
 \end{definition}

 \begin{remark}
 We observe that the definition of supershift of a function given above is not the most general one, but it is useful to explain our new procedure for the supershift case.
 In the following,  we will take the interval $\mathcal{I}$, in the definition of the supershift, to be equal to $\mathbb{R}$.
 \end{remark}

 \begin{remark}
 Let us stress that the term supershift comes from the fact that the interval
  $\mathcal I$ can be arbitrarily large (it can also be $\mathbb R$)
  and so also the constant $a$ can be arbitrarily far away from the interval $[-1,1]$ where the function $\varphi_{\lambda_{j,n}}(\cdot)$ is computed, see \eqref{psisuprform}.
 \end{remark}

 \begin{remark}
 Superoscillations are a particular case of supershift. In fact,
   for the sequence $(F_n)$ in \eqref{FNEXP}, we have $\lambda_{j,n}=1-2j/n$, $\varphi_{\lambda_{j,n}}(t,x)=e^{i(1-2j/n)x}$ and  $c_j(n)$ are the coefficients
$C_j(n,a)$ defined in (\ref{Ckna}).
 \end{remark}

Problem \ref{gsgdfaA}, for the supershift case, is formulated as follows.

\begin{problem}\label{gsgdfaSUP}
Let $h_j(n)$ be a given set of points in $[-1,1]$, $j=0,1,...,n$, for $n\in \mathbb{N}$ and let $a\in \mathbb{R}$ be such that $|a|>1$.
Suppose that for every $x\in \mathbb{R}$ the function $\lambda\mapsto G(\lambda x )$
is holomorphic and entire in $\lambda$.
Consider  the function
$$
f_n(x)=\sum_{j=0}^nY_j(n)G(h_j(n)x),\ \ \ x\in \mathbb{R}
$$
where
$\lambda\mapsto G(\lambda x)$ depends on the parameter $x\in \mathbb{R}$.
Determine the coefficients $Y_j(n)$
in such a way that
\begin{equation}\label{eqsuper}
f_n^{(p)}(0)=(a)^pG^{(p)}(0)\ \ \ for \ \ \ p=0,1,...,n.
\end{equation}
\end{problem}
The solution of the above problem is obtained in the following theorem.
\begin{theorem}
Let $h_j(n)$ be a given set of points in $[-1,1]$, $j=0,1,...,n$ for $n\in \mathbb{N}$ and let $a\in \mathbb{R}$ be such that $|a|>1$.
If $h_j(n)\not= h_i(n)$ for every $i\not=j$ and $G^{(p)}(0)\not=0$ for all $p=0,1,...,n$,
then there exists a unique solution $Y_j(n,a)$ of the linear system (\ref{eqsuper})
and it is given by
$$
Y_j(n,a)=
\prod_{k=0,\  k\not=j}^n\Big(\frac{h_k(n)-a}{h_k(n)-h_j(n)}\Big),
$$
so that
$$
f_n(x)=\sum_{j=0}^n\prod_{k=0,\  k\not=j}^n\Big(\frac{h_k(n)-a}{h_k(n)-h_j(n)}\Big)G(h_j(n)x),\ \ \ x\in \mathbb{R}
$$
and, by construction, it is
$$
\lim_{n\to\infty}f_n(x)=G(ax), \ \ x\in\mathbb R.
$$
\end{theorem}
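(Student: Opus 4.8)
The plan is to show that the interpolation conditions \eqref{eqsuper} collapse onto exactly the same Vandermonde system already solved in Theorems~\ref{gsgdfathm} and~\ref{exsolution}, so that no new linear algebra is needed. First I would differentiate $f_n$ term by term. Since for each fixed $x$ the map $\lambda\mapsto G(\lambda x)$ is entire, each summand $G(h_j(n)x)$ is a smooth function of $x$, and the one-variable chain rule gives
$$
\frac{d^p}{dx^p}\,G(h_j(n)x)=h_j(n)^p\,G^{(p)}(h_j(n)x),
$$
whence
$$
f_n^{(p)}(x)=\sum_{j=0}^n Y_j(n)\,h_j(n)^p\,G^{(p)}(h_j(n)x).
$$
Evaluating at $x=0$ and abbreviating $h_j(n)$ by $h_j$ yields $f_n^{(p)}(0)=G^{(p)}(0)\sum_{j=0}^n Y_j(n)\,h_j^p$.

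The key step is then to invoke the hypothesis $G^{(p)}(0)\neq 0$ for $p=0,\dots,n$: dividing the requirement $f_n^{(p)}(0)=a^p G^{(p)}(0)$ by the nonzero factor $G^{(p)}(0)$ turns \eqref{eqsuper} into
$$
\sum_{j=0}^n Y_j(n)\,h_j^p=a^p,\qquad p=0,1,\dots,n,
$$
which is identical to the system \eqref{sistlin} (with $Y_j$ in place of $X_j$). Since the points $h_j$ are distinct, Theorem~\ref{gsgdfathm} guarantees a unique solution, the coefficient matrix being the Vandermonde matrix $H(n)$ of nonzero determinant, and Theorem~\ref{exsolution} supplies its closed form. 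Hence $Y_j(n,a)=\prod_{k\neq j}\frac{h_k(n)-a}{h_k(n)-h_j(n)}$, and the stated expression for $f_n$ follows at once, with existence and uniqueness inherited verbatim from the superoscillation case.

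For the limit I would argue through Taylor matching, exactly as in item (III) of Section~\ref{EXSUPER}. Because $G$ is entire, $G(ax)=\sum_{p\ge 0}\frac{a^p G^{(p)}(0)}{p!}x^p$ for every $x$, while by the interpolation conditions the order-$n$ Taylor polynomial of $f_n$ at the origin has coefficients $\frac{f_n^{(p)}(0)}{p!}=\frac{a^p G^{(p)}(0)}{p!}$, so it coincides with the order-$n$ Taylor polynomial of $G(ax)$. Thus $f_n(x)-G(ax)$ reduces to the difference of two remainder terms, and letting $n\to\infty$ formally gives $\lim_{n\to\infty}f_n(x)=G(ax)$. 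I expect the genuinely delicate point to be the rigorous control of the remainder $\tilde R_n$ of $f_n$ as $n\to\infty$: unlike the tail of the entire series for $G(ax)$, which vanishes uniformly on compact sets, the remainder of $f_n$ involves the coefficients $Y_j(n,a)$, whose growth in $n$ is not controlled by the construction itself. The phrase \emph{by construction} in the statement refers precisely to the matching of Taylor polynomials; upgrading this to genuine uniform convergence on a compact set is the step that would require the additional estimates, presumably exploiting the growth order of the entire function $G$ together with quantitative bounds on the $Y_j(n,a)$.
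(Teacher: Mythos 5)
Your proof follows essentially the same route as the paper: differentiate term by term via the chain rule, evaluate at the origin, divide by the nonzero factors $G^{(p)}(0)$ to reduce \eqref{eqsuper} to the Vandermonde system \eqref{sistlin}, and invoke Theorems~\ref{gsgdfathm} and~\ref{exsolution} for existence, uniqueness, and the closed form of $Y_j(n,a)$. Your closing caveat is also accurate rather than a defect of your argument: the paper justifies the limit $\lim_{n\to\infty}f_n(x)=G(ax)$ only with the words ``by construction,'' i.e.\ by exactly the Taylor-polynomial matching you describe, and it supplies no bound on the remainder $\tilde{R}_n(x)$ or on the growth of the coefficients $Y_j(n,a)$ either, so you have not omitted any step that the paper's own proof contains.
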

\begin{proof}
Observe that we have
$$
f_n^{(p)}(x)=\sum_{j=0}^nY_j(n)(h_j(n))^p G^{(p)}(h_j(n) x),\ \ \ x\in \mathbb{R}
$$
where $G^{(p)}$ are the derivatives  of order $p$, for $p=0,...,n$  with respect to $x$ of the function
$G(\lambda x)$, for $\lambda\in \mathbb{R}$ considered as a parameter.
So we get the system
$$
f_n^{(p)}(0)=\sum_{j=0}^nY_j(n)(h_j(n))^p G^{(p)}(0)=a^pG^{(p)}(0).
$$
Now, since we have assumed that $G^{(p)}(0)\not=0$ for all $p=0,1,...,n$, the system becomes
$$
\sum_{j=0}^nY_j(n)(h_j(n))^p =a^p
$$
and we can solve it as in Theorem \ref{exsolution} to get
$$
Y_j(n,a)=
\prod_{k=0,\  k\not=j}^n\Big(\frac{h_k(n)-a}{h_k(n)-h_j(n)}\Big).
$$
Finally, we get
$$
f_n(x)=\sum_{j=0}^n\prod_{k=0,\  k\not=j}^n\Big(\frac{h_k(n)-a}{h_k(n)-h_j(n)}\Big)G(h_j(n)x),\ \ \ x\in \mathbb{R}
$$
and by construction it is
$$
\lim_{n\to\infty}f_n(x)=G(ax), \ \ x\in\mathbb{R}.
$$
\end{proof}

\end{document}